\def\eqref#1{equation~\ref{#1}}
\def\1{\bm{1}}
\def\eps{{\epsilon}}
\def\vone{{\bm{1}}}
\def\vtheta{{\bm{\theta}}}
\def\vq{{\bm{q}}}
\def\vx{{\bm{x}}}
\def\mD{{\bm{D}}}
\DeclareMathAlphabet{\mathsfit}{\encodingdefault}{\sfdefault}{m}{sl}
\SetMathAlphabet{\mathsfit}{bold}{\encodingdefault}{\sfdefault}{bx}{n}
\newcommand{\R}{\mathbb{R}}
\newcommand{\pr}[1]{\operatorname{{\bf Pr}}\left[ #1 \right]}
\newcommand{\cD}{\mathcal{D}}
\newcommand{\cE}{\mathcal{E}}
\newcommand{\cG}{\mathcal{G}}
\newcommand{\cN}{\mathcal{N}}
\newcommand{\cQ}{\mathcal{Q}}
\newcommand{\cV}{\mathcal{V}}
\theoremstyle{plain}
\newtheorem{theorem}{Theorem}[section]
\theoremstyle{definition}
\newtheorem{definition}[theorem]{Definition}
\theoremstyle{remark}
\title{Hierarchical Retrieval: The Geometry and a Pretrain-Finetune Recipe}
\author[]{Chong You}
\author[]{Rajesh Jayaram}
\author[]{Ananda Theertha Suresh}
\author[]{Robin Nittka}
\author[]{Felix Yu}
\author[]{Sanjiv Kumar}
\affil[]{Google}
\begin{abstract}
Dual encoder (DE) models, where a pair of matching query and document are embedded into similar vector representations, are widely used in information retrieval due to their simplicity and scalability. 
However, the Euclidean geometry of the embedding space limits the expressive power of DEs, which may compromise their quality. 
This paper investigates such limitations in the context of \emph{hierarchical retrieval (HR)}, 
where the document set has a hierarchical structure and the matching documents for a query are all of its ancestors. 
We first prove that DEs are feasible for HR as long as the embedding dimension is linear in the depth of the hierarchy and logarithmic in the number of documents. 
Then we study the problem of learning such embeddings in a standard retrieval setup where DEs are trained on samples of matching query and document pairs. 
Our experiments reveal a \emph{lost-in-the-long-distance} phenomenon, where retrieval accuracy degrades for documents further away in the hierarchy.
To address this, we introduce a \emph{pretrain-finetune} recipe that significantly improves long-distance retrieval without sacrificing performance on closer documents. 
We experiment on a realistic hierarchy from WordNet for retrieving documents at various levels of abstraction, and show that \emph{pretrain-finetune} boosts the recall on long-distance pairs from 19\% to 76\%. 
Finally, we demonstrate that our method improves retrieval of relevant products on a shopping queries dataset. 
\end{abstract}
\begin{document}

\maketitle

\section{Introduction}

\emph{Information retrieval} \citep{mitra2018introduction} is the task of finding the most relevant documents within a large database in response to a user query. 
\textbf{Dual Encoder (DE)} is one of the most popular modeling architectures for information retrieval due to their simplicity and scalability. 
It functions by encoding each query and each document by a vector representation, e.g., using a deep neural network \citep{guo2016deep,chang2020pre,zhao2024dense}.
Then, similarity between a query and a document is calculated using their Euclidean distance or inner product, enabling scalable retrieval through approximate nearest neighbor search \citep{johnson2019billion,guo2020accelerating,chern2022tpu}.

This paper considers \textbf{Hierarchical Retrieval (HR)}, a particular case of information retrieval where the set of documents is organized into a (hidden) hierarchical structure. 
To motivate our retrieval task on a hierarchy, consider the keyword targeting problem in online advertising where ad platforms aim to display ads based on the relevance of the associated keywords to a user's query. 
While many notions of relevance exist, a particularly important case, known as \emph{Phrase Match} \citep{GoogleAds,AmazonAds,MicrosoftAds}, defines relevant keywords as those that are semantically more general than the user's query. 
This definition motivates the modeling of advertiser keywords as a hierarchy, where higher level keywords are semantically more general than the lower level keywords (see \Cref{fig:illutration} Left).
Then, Phrase Match may be expressed as the problem of retrieving not only the keyword that matches exactly the meaning of a user query, but also all those at a higher level in the hierarchy. 
Motivated by Phrase Match, we define HR as the retrieval problem on a hierarchy that can be described by a direct acyclic graph (DAG, see \Cref{fig:illutration} Middle).
Then, the goal is to retrieve, for each node $u$, all nodes $v$ reachable from $u$ with a directed path.
In particular, we assume that the DAG is unobserved, which is the typical case in information retrieval where the model is learned on a data set of matching query-document pairs. 


\begin{figure*}[t]
\vskip 0.1in
    \begin{center}
        \centerline{
            \includegraphics[clip=true,trim=0 9cm 0 0,width=0.99\columnwidth]{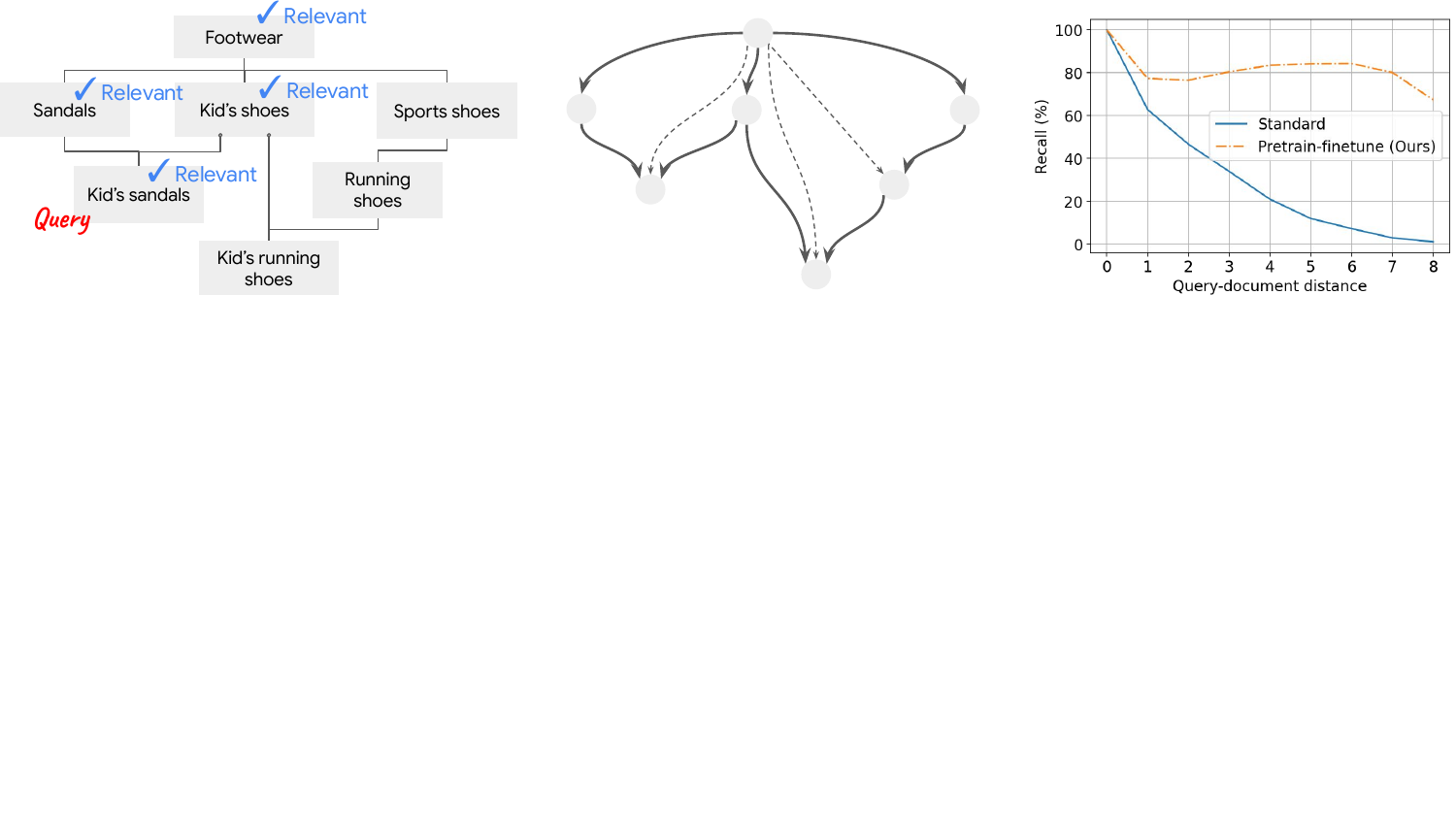}
            }
        \caption{
        (Left) A illustrative example of a document set that forms a hierarchy. Given a query, e.g. ``Kid's sandals'', the goal is to retrieve all its ancestors in the hierarchy. 
        (Middle) An abstraction of this hierarchy using a DAG where the edges are represented as solid arrows. For each node $u$ of the graph, the goal of \textbf{Hierarchical Retrieval (HR)} is to retrieve all $v$ such that $v$ is reachable from $u$. 
        (Right) The \emph{lost-in-the-long-distance} phenomenon of a Dual Encoders (DEs) trained for HR on the WordNet DAG. Documents further away from the query in the hierarchy are more difficult to retrieve. We introduce a \textbf{pretrain-finetune} recipe to drastically improve long-distance retrieval.
        }
    \label{fig:illutration}
    \end{center}
\vskip -0.33in
\end{figure*}



\vspace{-1em}
\paragraph{Why is Hierarchical Retrieval hard?}
DEs solve information retrieval tasks by finding embeddings such that a relevant document is closer in distance to the query than an irrelevant one.
For HR, this distance measure needs to be \emph{asymmetric}.
For example, if ``Kid's sandals'' is the query then ``Sandals'' is considered a relevant keyword, but the inverse is not true: if ``Sandals'' is the query then ``Kid's sandals'' is not a relevant keyword. 
This makes \emph{asymmetric} DE a natural choice, where the same node is embedded differently by a query encoder $Q()$ and a document encoder $D()$.

However, asymmetric DEs can still be limited for HR due to the properties of the Euclidean geometry \citep{menon2022defense}. 
To illustrate this, consider the query ``Kid's sandals'' for which a DE needs to place the two document embeddings $D($``Sandals''$)$ and $D($``Kid's shoes''$)$ in close proximity to each other, since they need to be both close to $Q($``Kid's sandals''$)$.
On the other hand, for the query ``Kid's running shoes'', a DE needs to place the aforementioned two document embeddings far apart since only one of them is close to $Q($``Kid's running shoes''$)$. 
This apparent inconsistency leads to the critical question:
\vspace{-0.5em}
\begin{quote}
    \em
    \centering
    Q1: Does there \textbf{exist} Dual Encoders that solve Hierarchical Retrieval?
\end{quote}
\vspace{-0.5em}
We will provide an affirmative answer to Q1, establishing the feasibility of DEs for HR.  

Nonetheless, the existence of such DEs does not mean that they can be learned from data. 
In practice, the following question is of great importance:
\vspace{-0.5em}
\begin{quote}
    \em
    \centering
    Q2: Can we \textbf{learn} (from train data) Dual Encoders that solve Hierarchical Retrieval?
\end{quote}
\vspace{-0.5em}
We show through experimental evidence that if the embedding dimension of DE is sufficiently high, then the answer to Q2 is positive as well. 
This result justifies DE as a feasible architecture for HR. 

\vspace{-1em}
\paragraph{Lost in the long distance?}
While the answer to Q2 is positive with a high embedding dimension, practical retrieval systems have memory and latency requirement hence a low embedding dimension is desirable and critical. 
Towards improving the practice of DEs for HR, we examine cases where learned embeddings fail due to an insufficient embedding dimension, and discover an intriguing \emph{lost-in-the-long-distance} phenomenon.
This phenomenon states that documents further away from the query in the underlying hierarchy are more difficult to retrieve, hence compromises the quality of the retrieval (see \Cref{fig:illutration} Right). 
To mitigate this, we introduce a \emph{pretrain-finetune} recipe, where a pretrained DE is finetuned on a dataset focusing solely on long-distance pairs. 
Such a recipe enhances the practicality of DEs for HR by improving long-distance retrieval capabilities. 





We summarize the contribution of this paper as follows.  

\begin{itemize}[align=left,leftmargin=*,itemsep=0pt,topsep=1pt]
    \item 

\textbf{Dual Encoders are feasible for Hierarchical Retrieval. }
We formally establish that asymmetric DEs are feasible for solving HR. 
Specifically, with a constructive algorithm that maps an arbitrary DAG to a set of asymmetric embeddings, we prove that such embeddings solve the associated HR task with a high probability. 
This holds as long as the dimension of the embedding space is larger than a threshold determined by the underlying hierarchy (\Cref{sec:geometry}). 

    \item 
\textbf{Dual Encoders can be learned from training data to solve Hierarchical Retrieval.}
The constructive algorithm above requires the DAG as input.
On the other hand, information retrieval tasks often do not directly provide the DAG but require learning embeddings from a training dataset of matching query-document pairs. 
Next, we conduct an experimental study of this learning problem, starting from a synthetic tree-structured hierarchies for gaining insights.
We verify that the learned DEs successfully solve HR with a sufficiently high embedding dimension (\Cref{sec:learning}).

    \item 
\textbf{A Pretrain-Finetune recipe improves the training of Dual Encoders for Hierarchical Retrieval.}
We reveal the \emph{lost-in-the-long-distance} phenomenon under a toy setup with a synthetic tree-structured hierarchy. 
Critically, we show that the standard approach based on rebalancing the sampling of short vs. long distance pairs in the training dataset fails to solve the problem. 
This highlights the importance of our \emph{pretrain-finetune} recipe, where a pretrained DE further finetuned on long-distance pairs gains enhanced long-distance retrieval capabilities without compromising the quality on short-distance documents (\Cref{sec:improving-hr}).
Finally, the effectiveness of this recipe extends beyond the toy setup to real datasets, including 1) WordNet, a large lexical database of English, and 2) ESCI, a shopping queries dataset (\Cref{sec:experiments}).

\end{itemize}


\vspace{-0.5em}
\subsection{Related Work}

\vspace{-0.1em}
\textbf{Graph embedding. }
Graph embedding broadly refers to methods that learn a set of node embeddings that preserves certain properties of the graph \citep{xu2021understanding}.
While Euclidean embedding is a natural choice, the symmetric nature of the distance metric makes it problematic for handling graphs with directed edges, such as DAGs.
To address the issue, numerous works have explored ideas going beyond Euclidean embeddings, e.g., by representing each node with a geometric region \citep{vendrov2015order,vilnis2018probabilistic,chheda2021box,rozonoyer2024learning} or a probability distribution \citep{vilnis2014word,athiwaratkun2018hierarchical}.
These ideas may be coupled with non-Euclidean metrics \citep{nickel2017poincare,tifreapoincare,ganea2018hyperbolic} to better model the hierarchical structures. 
However, such embedding models are not suited for retrieval due to a lack of an efficient nearest neighbor search that hinders their applicability to large scale data. 
Another work more related to ours is \cite{ou2016asymmetric}, where asymmetric embeddings are used to capture the asymmetry and transitivity of the DAG. 
However, the focus of \cite{ou2016asymmetric}
is on computing embeddings on a given graph. 
In contrast, retrieval applications often do not have access to the underlying graph, and embeddings are learned from a dataset of matching query-document pairs. 

\vspace{-0.8em}
\paragraph{Pretrain-finetune. }

The paradigm of pretraining and finetuning, where a model is first trained on a large, general-purpose dataset then adapted to downstream tasks with a finetuning procedure, is a cornerstone of modern machine learning   \citep{donahue2014decaf,raffel2020exploring,brown2020language}. For information retrieval, this paradigm demonstrates effectiveness for improved DE quality, particularly for sparse or noisy downstream data \citep{changpre2020}. 
Our pretraining-finetuning recipe adapts this paradigm by treating the retrieval of long-distance matches as a downstream task, which we demonstrate to be effective in addressing the \emph{loss-in-the-long-distance} challenge in HR.

\vspace{-0.5em}
\section{Problem Setup}
\label{sec:setup}

\vspace{-0.3em}
\paragraph{Hierarchical Retrieval (HR). }
Let $\cQ = \{q_i\}_{i=1}^n$ be a collection of $n$ queries and $\cD = \{x_j\}_{j=1}^m$ be a collection of $m$ documents, respectively. 
For each $i \in \{1, \ldots, n\}$, let $S(q_i) \subseteq \cD$ be the set that contains all documents that are the most relevant to the query $q_i$. 
The goal of information retrieval is to return a ranked list of the documents in $\cD$ for each $q_i$, with the top ones being those in $S(q_i)$. 

This paper considers HR, a particular case of information retrieval where the document set is associated with a hierarchy.
Let us assume that each query has an \emph{exact match} document, e.g., for a query ``Sandals for kids'', the document ``Kid's sandals'' from the document set illustrated in \Cref{fig:illutration} is considered an exact match. 
Then, the relevant document set $S(q_i)$ contains both its \emph{exact match} and all its descendants in the hierarchy. 
Formally, HR is defined as follows. 

\begin{definition}[Hierarchical Retrieval (HR)]\label{def:hr}
Assume that there is a directed acyclic graph (DAG), denoted as $\cG = (\cV, \cE)$, associated with the document set $\cD$, i.e., with $\cV = \cD$ and $\cE \subseteq \cV \times \cV$. 
Also, assume that there is a $E: \cQ \to \cD$, where $E(q_i)$ is referred to as the \emph{exact match} to $q_i$. 
Then, HR refers to information retrieval with the relevant documents given by $S(q_i) = \{x \in \cD: x$ is reachable from $E(q_i)\}$ for each $i \in \{1, \ldots, n\}$.
\end{definition}



\vspace{-1.2em}
\paragraph{Dual Encoders (DEs). }
DEs are embedding models that map the query and document to the same embedding space, where the inner product may be used to measure relevance. 
A DE is composed of a query encoder, denoted as $f_q(\cdot, \vtheta_q): \cQ \to \R^d$, and a document encoder, denoted as $f_x(\cdot, \vtheta_x): \cD \to \R^d$, where $\vtheta_q$ and $\vtheta_x$ are parameters to be learned from data. 

Following standard practice in information retrieval, we assume that the HR training dataset is composed of matching query-document pairs. 
That is, there is a collection of pairs $\{(q^{(k)}, \, x^{(k)})\}_{k=1}^N \subseteq \cQ \times \cD$ that satisfies $x^{(k)} \in S(q^{(k)})$ for each $k = 1, \ldots, N$. 
The parameters $\vtheta_q$ and $\vtheta_x$ of a DE may be learned from minimizing the following softmax loss on the training dataset:
\setlength{\belowdisplayskip}{3pt} \setlength{\belowdisplayshortskip}{2pt}
\setlength{\abovedisplayskip}{3pt} \setlength{\abovedisplayshortskip}{2pt}
\begin{equation}\label{eq:batch-softmax}
            L(\vtheta_q, \vtheta_x) = 
            \frac{1}{N} \sum\limits_{k=1}^N \text{CE}
            \Bigg( 
            \sigma\!\left(
            \frac{
            \mD(\vtheta_x)^\top \cdot f_q(q^{(k)}, \vtheta_q)
            }{T}
            \right)
            , \,
            \vone_k
            \Bigg),        
    \vspace{-0.2em}
\end{equation}
where
\begin{equation}
    \mD(\vtheta_x) = \Big[f_x(x^{(1)}, \vtheta_x), \ldots, f_x(x^{(N)}, \vtheta_x)\Big] \in \R^{d \times N}
\end{equation}
is a matrix containing all document embeddings as columns.  
In above, $\vone_k \in \R^N$ is a vector with the $k$-th entry being $1$ and all other entries being $0$, and $T$ is a hyper-parameter that is fixed to be $20$ in all our experiments. 
CE means the cross-entropy loss and $\sigma()$ represents the softmax function. 

Our ultimate objective is to understand whether a DE from optimizing \Cref{eq:batch-softmax} solves the HR problem. 
The answer will necessarily depend on multiple factors including the specificity of $\cG$, the architecture of $f_q(\cdot, \vtheta_q)$ and $f_x(\cdot, \vtheta_x)$, and the optimization algorithm, etc, signficantly complicating the study. 
In the following, we start with a study of the geometry of HR which is agnostic to the choice of model architecture and optimization procedure. 

\vspace{-0.5em}
\section{The Geometry of Dual Encoders for Hierarchical Retrieval}
\label{sec:geometry}
\vspace{-0.3em}

This section studies the following problem: Is there a collection of query and document embeddings that solves the HR? 
An affirmative answer to this question asserts the \emph{existence} of embeddings that solve HR, which is a necessary condition for the minimizer of \Cref{eq:batch-softmax} to solve HR.


To answer this question, we start with considering the special case when the embedding dimension $d$ is as large as the size $m$ of the document set. 
Here, one may simply take $\vx_j = \vone_j$ as the embedding for each $x_j$.
Subsequently, we set the query embedding for each $q_j$ as $\vq_i = \sum_{j \in S(q_i)} \vx_j$. 
It can be verified that $\langle \vq_i, \vx_j \rangle$ take a value $1$ if $j \in S(q_i)$, and $0$ otherwise, i.e., these embeddings solve HR. 

\vspace{-0.3em}
\begin{algorithm}
\caption{~~A constructive algorithm for Hierarchical Retrieval}\label{alg:alg}
    \begin{algorithmic}[1]
        \State {\bfseries Input:} A query set $\cQ = \{q_i\}_{i=1}^n$, a document set $\cD = \{x_j\}_{j=1}^m$, and relevant document sets $\{S(q_i) \subseteq \cD\}_{i=1}^n $.
        \State Sample $\{\widehat{\vx}_j\}_{j=1}^m \subseteq \R^d$ \emph{i.i.d.} from the standard Gaussian distribution. 
        \State For each $j$, take $\vx_j = \frac{\widehat{\vx}_j}{\|\widehat{\vx}_j\|_2}$.
        \State For each $i$, take $\vq_i = \frac{\widehat{\vq}_i}{\|\widehat{\vq}_i\|_2}$, where $\widehat{\vq}_i = \sum_{j \in S(q_i)} \widehat{\vx}_j$.
        \State {\bfseries Output:} Query embeddings $\{\vq_i\}_{i=1}^n$ and document embeddings $\{\vx_j\}_{j=1}^m$.
    \end{algorithmic}
\end{algorithm}
    \vspace{-0.6em}

The construction above is feasible only when the embedding dimension $d$ is allowed to be very large.
Towards deriving a tighter bound, we consider another construction where the document embeddings are drawn from a Gaussian distribution, in lieu of the one-hot embeddings, see \Cref{alg:alg}. 
We will use that random embeddings are with a high probability \emph{sufficiently uncorrelated} to each other to show that this construction provides a solution to HR with a much relaxed requirement on $d$. 
This is stated formally below.

\begin{theorem}\label{thm:main}
    Consider the HR problem in \Cref{def:hr}, and fix any $\eps \in (0,1/2)$.
    Assume that the hierarchy $\cG$ satisfies $|S(q_i)| \le s, \forall i \in [n]$ for some integer $s$.
    Then there exists a dimension $d$ with 
    \begin{equation}\label{eq:main-thm-dimension-requirement}
        d = O(\max\{s \log m, \nicefrac{1}{\eps^2} \log m \}),
    \end{equation}
    a threshold\footnote{This is a \emph{universal} threshold $r$ that separates the matching documents from no-matching ones for \emph{all} queries. In practice, it is often sufficient to have a query-dependent threshold.} $r$, and a collection of embeddings $\{\vq_i\}_{i=1}^n ,\{\vx_j\}_{j=1}^m \subset \R^d$, such that for all $i \in [n]$ and $j \in [m]$ the following holds:
    \begin{itemize}[align=left,leftmargin=*,itemsep=3pt,topsep=1pt]
        \item \textbf{Case 1:}  If $d_j \in S(q_i)$, then $\langle \vq_i, \vx_j \rangle \geq r+\eps$.
        \item \textbf{Case 2:}  If $d_j \notin S(q_i)$, then $\langle \vq_i, \vx_j \rangle \leq r - \eps$.
    \end{itemize}
    Moreover, the vectors $\{\vq_i\}_{i=1}^n ,\{\vx_j\}_{j=1}^m$ can be constructed to satisfy these properties in time $\tilde{O}(md + nsd)$ with high probability.
\end{theorem}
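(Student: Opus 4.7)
The plan is to analyze the Gaussian construction of Algorithm~\ref{alg:alg} through standard concentration bounds, showing that with high probability the normalized inner products $\langle \vq_i, \vx_j\rangle$ cluster near $1/\sqrt{|S(q_i)|}$ for matches and near $0$ for non-matches, so that a suitable threshold $r$ achieves the required $\epsilon$-margin uniformly over all $O(nm)$ pairs.

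The key technical steps are to establish the following events via standard Gaussian tail bounds and a union bound over $i\in[n]$, $j\in[m]$: (i) by $\chi^2_d$ concentration, $\|\widehat{\vx}_j\|_2^2 = d \cdot (1\pm O(\sqrt{\log(m)/d}))$; (ii) since the summands defining $\widehat{\vq}_i = \sum_{j\in S(q_i)} \widehat{\vx}_j$ are independent standard Gaussians, $\widehat{\vq}_i \sim \gN(0, |S(q_i)|\,\mI_d)$, so $\|\widehat{\vq}_i\|_2^2 = |S(q_i)|\,d\cdot(1\pm O(\sqrt{\log(n)/d}))$; (iii) for non-matches ($j\notin S(q_i)$), $\widehat{\vx}_j$ is independent of $\widehat{\vq}_i$, so conditionally $\langle \widehat{\vq}_i,\widehat{\vx}_j\rangle \sim \gN(0, \|\widehat{\vq}_i\|_2^2)$ and a Gaussian tail yields $|\langle \widehat{\vq}_i,\widehat{\vx}_j\rangle| = O(\sqrt{|S(q_i)|\,d\log(nm)})$; (iv) for matches ($j\in S(q_i)$), decomposing $\widehat{\vq}_i = \widehat{\vx}_j + \widehat{\vq}_i^{(-j)}$ with $\widehat{\vq}_i^{(-j)}$ independent of $\widehat{\vx}_j$ gives $\langle \widehat{\vq}_i,\widehat{\vx}_j\rangle = \|\widehat{\vx}_j\|_2^2 + \gN(0,\|\widehat{\vq}_i^{(-j)}\|_2^2) = d \pm O(\sqrt{s\,d\log(nm)})$. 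Dividing through by $\|\widehat{\vq}_i\|_2 \cdot \|\widehat{\vx}_j\|_2$, the normalized inner product equals $1/\sqrt{|S(q_i)|} \pm O(\sqrt{\log(nm)/d})$ for matches and $O(\sqrt{\log(nm)/d})$ for non-matches.

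From here I would place the threshold $r$ between the worst-case match value $1/\sqrt{s}$ and the non-match concentration $0$, and require the concentration error to be at most $\epsilon$ on each side. This produces two regimes that combine into the stated dimension bound $d = O(\max\{s\log m,\,\log m/\epsilon^2\})$: when $\epsilon$ is small compared to $1/\sqrt{s}$, the constraint is dominated by the concentration error, giving $d = \Omega(\log m/\epsilon^2)$; when $\epsilon$ is larger, I would modify the construction by rescaling the query embeddings by a factor $c$ proportional to $\epsilon\sqrt{s}$ so that the match--non-match separation grows to $\Theta(\epsilon)$, which forces $d = \Omega(s\log m)$ to keep the (similarly rescaled) concentration error below $\epsilon$. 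The main obstacle is producing a \emph{universal} threshold $r$ that works simultaneously for all queries despite varying $|S(q_i)|$; I would handle this by setting $r$ based on the worst case $|S(q_i)| = s$, which is safe because smaller $|S(q_i)|$ only makes the match inner products $1/\sqrt{|S(q_i)|}$ \emph{larger}, never smaller. Finally, the runtime claim $\tilde{O}(md + nsd)$ is immediate from the algorithm: sampling and normalizing $m$ Gaussians in $\R^d$ costs $O(md)$, and each $\widehat{\vq}_i$ is a sum of at most $s$ vectors, so all queries can be formed and normalized in $O(nsd)$.
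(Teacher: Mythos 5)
Your plan matches the paper's proof at the core: draw i.i.d.\ standard Gaussian document vectors, take each query to be a (scaled, then normalized) sum of its matching documents' vectors, and control norms via $\chi^2$ concentration and inner products via Gaussian tails --- using independence for non-matches and the split $\widehat{\vq}_i = \widehat{\vx}_j + \sum_{t\in S(q_i)\setminus\{j\}}\widehat{\vx}_t$ for matches --- closing with a union bound over all pairs. The one substantive place you diverge is the rescaling step, and it is a needed refinement rather than an embellishment. The paper keeps every embedding unit-normalized, sets $\gamma = 10\max\{s,1/\eps^2\}$, $d = C\gamma\log n$, $r = 1/(4\sqrt{\gamma})$, derives ``match $\geq 1/(3\sqrt{\gamma})$, non-match $\leq r/2$,'' and then asserts $r/2 < r - \eps$. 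But $\gamma \geq 10/\eps^2$ forces $r/2 = 1/(8\sqrt{\gamma}) \leq \eps/(8\sqrt{10}) < \eps$, so that final inequality never holds as written; and more fundamentally, with unit-normalized embeddings the match/non-match gap is at most $O(1/\sqrt{s})$, which cannot deliver an $\eps$-margin once $\eps = \Omega(1/\sqrt{s})$. Your rescaling of the query vectors by $c \sim \eps\sqrt{s}$ (permitted by the theorem, which only requires embeddings in $\R^d$, not unit vectors) addresses precisely this regime, and your two-regime accounting reproduces $d = O(\max\{s\log m,\,\log m/\eps^2\})$. Your observation that a universal $r$ can be chosen against the worst case $|S(q_i)|=s$ because smaller sets only push the match value up is also correct, and the runtime argument is identical to the paper's.
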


\Cref{thm:main} asserts that the required embedding dimension is on the order specified in \Cref{eq:main-thm-dimension-requirement}. 
This order is logarithmic in $m$, implying that a small dimension can be sufficient for handling a large document set. 
For a fixed $\eps$ and $m$, this order is linear in $s$, the maximum number of relevant documents per query, which in practice is usually much smaller than $m$. 
For instance, when $\cG$ is a perfect tree with edges pointing from each child node to its parent node, $s$ becomes the number of levels of the tree and is on the order of $\log(m)$; plugging this into \Cref{eq:main-thm-dimension-requirement} we obtain $M = O((\log m)^2)$, which again scales benignly with $m$.
Finally, \Cref{eq:main-thm-dimension-requirement} is independent of $n$, the number of queries. 
This is because the theorem only requires us to handle a subset of queries $\{\vq_i\}_{i \in T}, T \subseteq \{1, \ldots, n\}$ for which their relevant document sets are distinct from each other, i.e., $S(\vq_i) \ne S(\vq_j), \forall \{i, j\} \subseteq T$ and $i \ne j$.
For HR, the size of such a $T$ is upper bounded by $m$ according to \Cref{def:hr}. 


The construction procedure in \Cref{alg:alg} requires knowing the sets $\{S(q_i)\}_{i=1, \ldots, m}$. 
In practice, these sets are often not directly observed. 
Instead, it is often the case that the training dataset is composed of a set of matching query-document pairs, see \Cref{sec:setup}. 
Hence, while \Cref{thm:main} establishes the correctness of such embeddings, \Cref{alg:alg} may not be applicable in practice. 
Instead, the embeddings are often learned by optimizing a proper training loss such as \Cref{eq:batch-softmax}. 
Understanding if such learned embeddings solve the HR problem is the subject of the next section.

\paragraph{Comparing \Cref{thm:main} with Prior Work \cite{guo2019breaking}.}
Our \Cref{thm:main} relates to \cite{guo2019breaking} which also establishes logarithmic bounds on embedding dimensions. 
However, \cite{guo2019breaking} addresses a different problem. 
The focus of \cite{guo2019breaking} is multi-label classification, for which their Theorem 2.1 establishes a bound for representing the multi-label set. 
In contrast, our \Cref{thm:main} specifically addresses HR, proving the existence of asymmetric query and document embeddings in Euclidean space that satisfy the specific ancestor-retrieval property. 
That being said, \cite{guo2019breaking}'s result may be applied to HR by associating their multi-label set with a hierarchy. With this specification, their result states the following (informally):

\vspace{-0.5em}
\begin{quote}
    \centering
    There is a dimension $d = O(s \log m)$ such that for $d_j \in S(q_i)$, it has $\langle \vq_i, \vx_j \rangle > 2/3$, and for $d_j \notin S(q_i)$, it has $\langle \vq_i, \vx_j \rangle < 1/3$.
\end{quote}
\vspace{-0.5em}
Comparing to this result which has a fixed gap of $1/3$ between matching and no-matching pairs, ours in \Cref{thm:main} holds for an arbitrary gap of $2 \epsilon$, hence is more general. 

\vspace{-0.5em}
\section{Towards Learning Dual Encoders for Hierarchical Retrieval}
\label{sec:learning}
\vspace{-0.5em}

This section studies whether learned DEs from optimizing \Cref{eq:batch-softmax} can solve HR. 
To understand the effect of depth and size of the hierarchy, this section focuses on synthetic, tree-structured hierarchies.
Experiments on real hierarchies from practical data are provided in \Cref{sec:experiments}.


\begin{figure}[t]
\vspace{-0.3em}
    \begin{center}
        \begin{subfigure}{0.35\columnwidth}
            \includegraphics[clip=true,trim=0 0 0 0,width=\columnwidth]{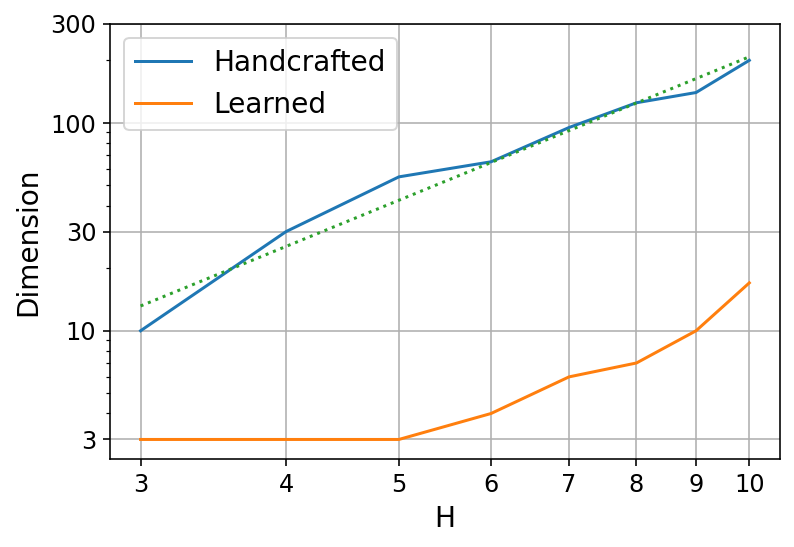}        
            \caption{Varying $H$ for $W = 2$}
            \label{fig:learned-vs-handcrafted-varying-h}
        \end{subfigure}
        ~~~
        \begin{subfigure}{0.35\columnwidth}
            \includegraphics[clip=true,trim=0 0 0 0,width=\columnwidth]{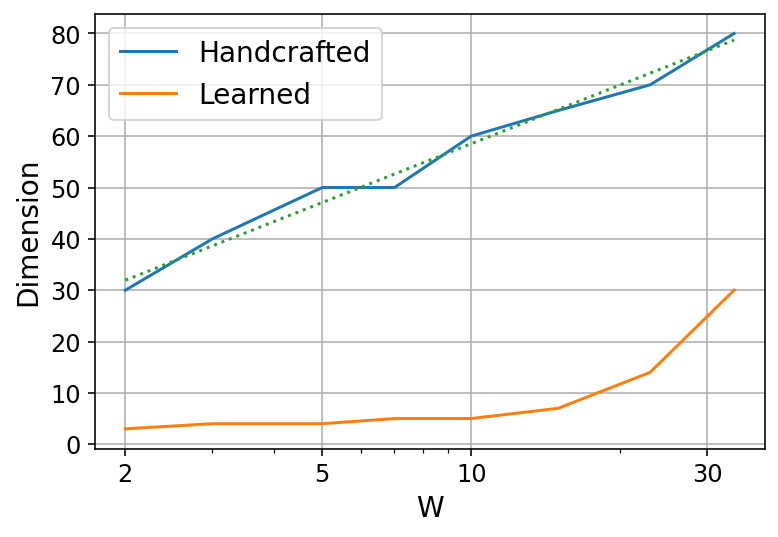}
            \caption{Varying $W$ for $H = 4$}
            \label{fig:learned-vs-handcrafted-varying-w}
        \end{subfigure}
        \caption{
        Comparing \emph{learned} (\emph{i.e.}, from optimizing \Cref{eq:batch-softmax}) vs \emph{handcrafted} (\emph{i.e.}, from running \Cref{alg:alg}) DEs for a HR task defined on a $W, H$-tree.
        For each $(H, W)$ pair, we experiment with an increasing sequence of embedding dimensions $d$ until the retrieval is successful, and report that dimension on the y-axis. 
        Successful retrieval means that recall is above $95\%$ on the evaluation set. 
        Dotted lines are least squares fittings of the \emph{Handcrafted}. 
        }
        \label{fig:learned-vs-handcrafted}
    \end{center}
\vskip -0.2in
\end{figure}

\vspace{-0.8em}
\paragraph{A toy setup.} 
We consider \emph{perfect trees}, where each non-leaf node has the same number child nodes and all tree leaf nodes are at the same level. 
A perfect tree is described by two parameters, namely the number of child nodes for each non-leaf node (i.e., the \emph{width} $W$), and the number of levels (i.e., the \emph{height} $H$). 
We use \emph{$H, W$-tree} to refer to a perfect tree with width $W$ and height $H$.

Given an $H, W$-tree, we consider a HR problem where both the query set $\cQ$ and the document set $\cD$ have a one-to-one correspondence to the set of all nodes of the tree\footnote{Excluding the root node since if it were included then all queries should trivially retrieve it.}. 
Then, the hierarchy associated with $\cD$ is naturally described by a DAG with edges pointing from each child node to its parent node. 
Hence, the relevant document set $S(q)$ for a query $q \in \cQ$ contains the document that corresponds to the same node in the tree as $q$, as well as all nodes reachable from it.  
We consider solving this HR problem by a lookup-table DE, a standard choice for studying embedding models \citep{dhingra2018embedding,parmar2022hyperbox,an2023coarse}.
In such a DE, the encoders $f_q$ and $f_x$ are lookup tables with one embedding associated with each $q \in \cQ$ and $x \in \cD$, respectively. 

We sample training data using the following procedure. 
First, a query $q$ is sampled by drawing a node with equal probabilities from all nodes of the tree. 
Then, we obtain a matching document to $q$ by sampling a node with equal probabilities from the set of all of its matching documents. 
Training is conducted by optimizing \Cref{eq:batch-softmax} with gradient descent. 
Evaluation is conducted on data sampled using the same procedure described above. 
We use the standard \emph{recall} metric, which is the percentage of $(q, x)$ pairs in the evaluation set for which $x$ is one of the $k$ documents that have the largest inner product score with $q$, with $k = |S(q)|$ being the total number of relevant documents for $q$. 
We are interested in the smallest embedding dimension (i.e., $d$) sufficient for a successful retrieval. 
To obtain this dimension, for each $(H, W)$, we experiment with an increasing sequence of $d$ and terminate the process when the evaluation recall metric is >95\%.

\begin{figure*}[t]
\vspace{-0.3em}
    \begin{center}
        \begin{subfigure}{0.32\textwidth}
            \includegraphics[clip=true,trim=0 0 0 0,width=\columnwidth]{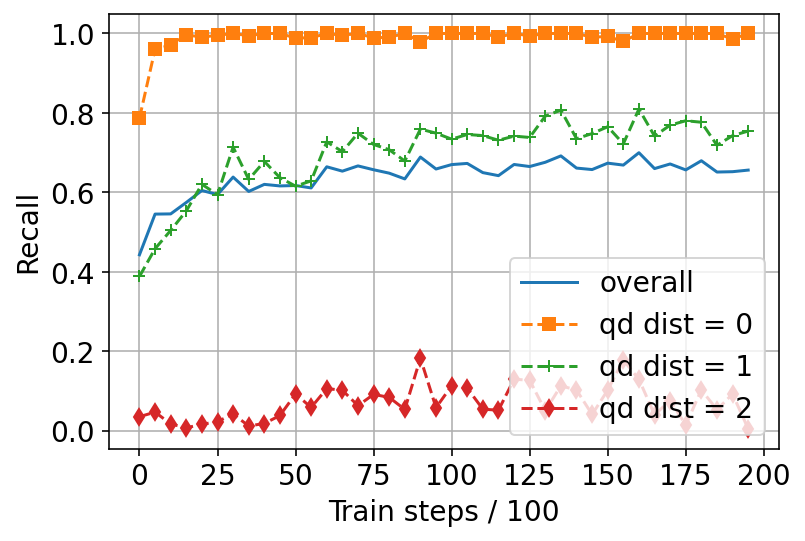}        
            \caption{Regular sampling of train data}
            \label{fig:qd_dist_recall_balanced}
        \end{subfigure}
        ~
        \begin{subfigure}{0.32\textwidth}
            \includegraphics[clip=true,trim=0 0 0 0,width=\columnwidth]{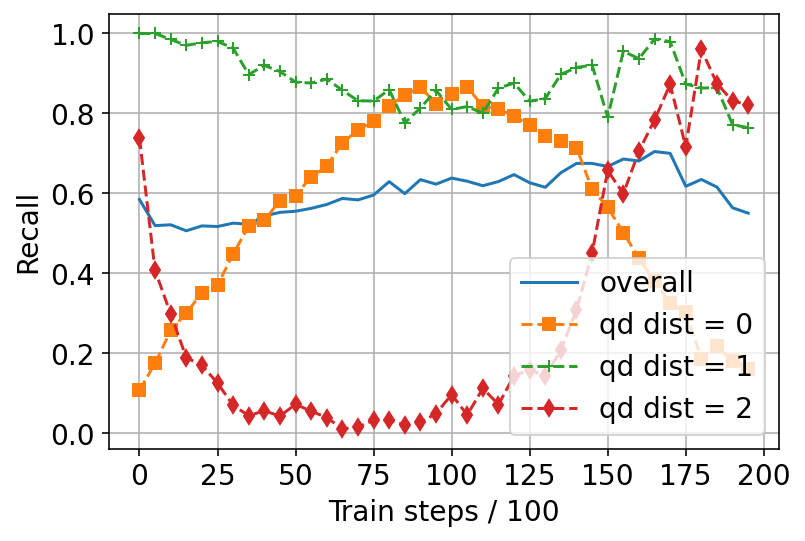}
            \caption{Re-balanced train data}
            \label{fig:qd_dist_recall_imbalanced}
        \end{subfigure}
        ~
        \begin{subfigure}{0.32\textwidth}
            \includegraphics[clip=true,trim=0 0 0 0,width=\columnwidth]{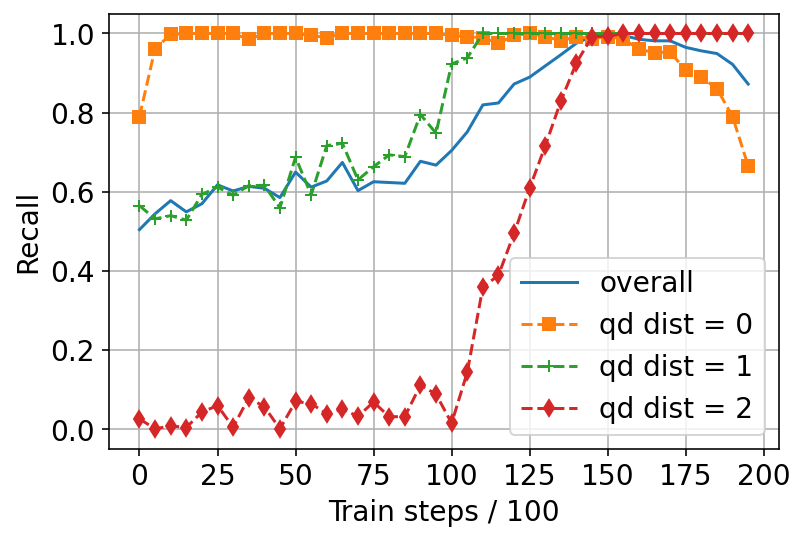}        
            \caption{Pretraining-finetuning}
            \label{fig:qd_dist_recall_curriculum}
        \end{subfigure}
        \vspace{-0.1em}
    \caption{
    Recall on query-document pairs at varying distances $\in \{0, 1, 2\}$ (denoted as \emph{qd dist} in the legend).
    We train DEs with $d = 3$ dimensional embeddings via optimizing \Cref{eq:batch-softmax} on a tree with $H = 4$ and $W = 5$. 
    (a) Regular training data. The recall for $(q, x)$ pairs with distances $1$ and $2$ are low.  
    (b) Re-balanced training data, where $(q, x)$ pairs with a distance of $1$ or $2$ are up-sampled. The recall for such pairs are significantly improved at the cost of a drastic decrease in recall for pairs with distance $0$. 
    (c) Pretrain on regular data for 10k steps then finetune on long distance pairs for another 10k steps. Recalls at all distances are close to 100\% near 15k steps. In particular, the overall recall (i.e., averaged over the 3 distances) improves to 97\%, compared to 66\% in (a) and 70\% in (b). 
    }
    \label{fig:qd_dist_recall}
    \end{center}
\vspace{-1.5em}
\end{figure*}

\vspace{-0.8em}
\paragraph{Results.}
In \Cref{fig:learned-vs-handcrafted}, we report the dimension $d$ needed for a successful retrieval as a function of $(H, W)$. 
Towards that, we vary $H$ for a fixed $W = 2$ on the left, and vary $W$ for a fixed $H = 4$ on the right. 
Both cases show that a reasonably large $d$ is sufficient even for $H$ up to 10 and $W$ up to 30.

We further compare such \emph{learned} embeddings with \emph{handcrafted} ones from \Cref{alg:alg}. 
From \Cref{thm:main}, the handcrafted embeddings solve HR with $d = O(s \log m)$. 
For $H, W$-trees, we have $s = H$, $m = O(W^{H-1})$, which gives $d = O(H^2 \log W)$. 
This aligns well with our simulation results which we report in \Cref{fig:learned-vs-handcrafted}. 
For example, in \Cref{fig:learned-vs-handcrafted-varying-h}, we perform a line fitting in the log-log space and obtain a slope of 2.29, whereas the slope derived from $d = O(H^2 \log W)$ is 2. 
In \Cref{fig:learned-vs-handcrafted-varying-w}, we perform a line fitting in the space of $\log(W)$ and obtains a slope of 16.5, whereas the slope derived from $d = O(H^2 \log W)$ is 16.
Finally, Figure~\ref{fig:learned-vs-handcrafted} shows that that learned embeddings achieve successful retrieval with a much smaller $d$ compared to our handcrafted embeddings.

\vspace{-0.5em}
\section{Improving Dual Encoders for Hierarchical Retrieval}
\vspace{-0.3em}
\label{sec:improving-hr}

With the establishment that standard training of DEs solves HR, this section takes one step further and asks the following practical question: Can we improve our training algorithm to minimize the dimension required for solving HR? 
We approach this by examining the failure cases of learned embeddings from standard training \emph{when the dimension is insufficient}.
This leads us to discover a common failure case called the \emph{lost-in-the-long-distance} phenomenon. 
In addressing this issue, we present a pretrain-finetune recipe that leads to an improved retrieval quality. 

\vspace{-0.5em}
\subsection{\emph{Lost-in-the-Long-Distance}}
\label{sec:long-distance}

We again consider HR on a $H, W$-tree as described in \Cref{sec:learning}, but focus on a particular case where standard training of DE fails to retrieve relevant documents. 
In particular, we consider the case of $H = 4$, $W = 5$, and $d = 3$.
Towards understanding this failure case, we introduce the notion of distance between a matching pair $(q, x)$, defined as the difference between the level of the tree nodes corresponding to $q$ and $x$. 
For example, a distance of 0 means that $q$ and $x$ correspond to the same node, and a distance of 1 means that $x$ corresponds to the parent node of $q$. 

For the tree with $H = 4$, $W = 5$, any query that corresponds to a leaf node has 3 matching documents with distance 0, 1, and 2, respectively (recall that the root node does not correspond to any query / document).
We evaluate recall for query-document pairs at these three distances separately, and report results in Figure~\ref{fig:qd_dist_recall_balanced}. 
It can be seen that the learned embeddings achieve almost perfect retrieval for pairs with a distance 0, but do not work well for pairs with distance 1 and 2.
We refer to the phenomenon that matching documents at longer distances to the query tend to be lost in retrieval as \emph{lost-in-the-long-distance}. 

\vspace{-0.8em}
\paragraph{Failure of re-balanced sampling. }
A tempting approach to alleviate \emph{lost-in-the-long-distance} is to re-balance the training set, so that more pairs of longer distances are included. 
To test this, we consider two sampling distributions:
\begin{itemize}[align=left,leftmargin=*,itemsep=2pt,topsep=1pt]
    \item \emph{Regular} sampling which refers to the sampling procedure described in Section~\ref{sec:learning}. For $H = 4$, $W = 5$, distances 0, 1, and 2 pairs are sampled with probabilities 38\%, 35\%, and 27\%, respectively. 
    \item \emph{Heavy-Tail} sampling, where pairs with distances 0, 1, and 2 are sampled with probabilities 0\%, 50\%, and 50\%, respectively.
\end{itemize}

By mixing regular and heavy-tail sampling with a ratio of $p : 1 - p$, we may create training datasets with a controllable ratio between short and long distance pairs. 
In Figure~\ref{fig:qd_dist_recall_imbalanced} we report the result with $p = 0.03$. 
It can be seen that the recall for pairs with distance 1 and 2 are significantly improved and reaches a level of beyond 80\% towards the end of training. 
However, this comes at the cost of a significant recall degradation on distance 0 pairs. 
Finally, this tradeoff cannot be fixed by tuning $p$, as illustrated in Figure~\ref{fig:rebalance-varying-p} (see Appendix) which contains further results with varying $p$ in $\{0.01, 0.1, 0.3\}$. 

\vspace{-0.5em}
\subsection{Main Algorithm: A Pretrain-Finetune Recipe}
\vspace{-0.3em}
\label{sec:pretrain-finetune}




We introduce a pretrain-finetune recipe to address the challenge of \emph{lost-in-the-long-distance}. 
This approach simply means that the DE is first pretrained on a standard training set, then finetuned on a long-distance dataset. 
Notably, the finetuning stage requires long-distance data \emph{only} and does \emph{not} require tuning the ratio of short vs long distance pairs as a hyperparameter. 

We conduct an experiment with pretraining and finetuning using data from regular and heavy-tail sampling, respectively, and report the results in Figure~\ref{fig:qd_dist_recall_curriculum}. 
We observe that in the finetuning stage, the retrieval quality for pairs with distance 1 and 2 quickly improves and reaches nearly 100\% at 15,000 train step.
Notably, at this point the recall for distance 0 pairs remains close to 100\%, and the overall recall (i.e., averaged over pairs of all distances) is 97\%, far exceeding the regular data sampling (which has 66\% recall) or re-balanced data sampling (which has 70\% near 17000 steps).
Finally, after 15,000 steps the quality of distance 0 pairs starts to decline. 
This is expected since the finetuning stage does not have any training data with distance 0. 
However, this quality degradation does not compromise the practicality of our approach since one can apply early stopping during the finetuning stage by monitoring the model quality on a validation set. 

\vspace{-0.5em}
\paragraph{Discussion on data requirement.}
In applying the pretrain-finetune recipe, a practical question is how to construct the long-distance dataset for finetuning when the underlying hierarchy, and thus the query-document distances, is unobserved as is typical in many retrieval applications. 
The key point is that our recipe does not require precise path lengths or knowledge of the full DAG. 
Instead, it only requires a practical proxy for distance that can be used to partition the training data into short-distance and long-distance subsets. This proxy is often readily available from the data or the problem definition itself. For instance, in our shopping dataset experiment (see \Cref{sec:experiments}), we treat Exact query-product matches as the short-distance set for pretraining and Substitute matches as the long-distance set for finetuning. 
In other scenarios, this partition could be based on whether a document is a direct parent versus a more remote ancestor in a known but partial hierarchy.
Finally, human annotation can be another viable path towards obtaining such a dataset, which is a significantly easier task than annotating the full DAG. 
This flexibility allows our pretrain-finetune recipe to be applied in a wide range of practical settings where the full hierarchy is not explicitly given.

Finally, our recipe implicitly assumes that there is sufficient short-distance data to learn a meaningful initial representation during pretraining. If this data is extremely sparse, pretraining may be ineffective, and a mixed training approach might indeed perform better.

\vspace{-0.5em}
\section{Experiments on Real Data}
\label{sec:experiments}
\vspace{-0.3em}

In this section, we experiment with the pretrain-finetune recipe on two real datasets, namely WordNet and ESCI. 
On WordNet, which is a large lexical database of English, our method improves the retrieval of hypernyms that are several levels more general than the query.
On ESCI, which is a shopping queries dataset where each query has both exact matching products and substitute products, our method enables a single DE to retrieve both categories at a higher recall. 


\vspace{-0.5em}
\subsection{WordNet Experiments}
\vspace{-0.3em}

WordNet \citep{miller1995wordnet} is a large lexical database of English where the nouns, verbs, adjectives, and adverbs are grouped into \emph{synsets} that represent synonyms. 
The set of synsets is equipped with a binary \emph{hypernym} relation, e.g., ``chair'' is the hypernym of ``armchair''.
This relation may be described by a DAG with nodes corresponding to synsets and edges pointing from a synset to its hypernym synset. 

In our experiments, we use the 82,115 noun synsets as our document set $\cD$. 
We take the query set $\cQ$ to be the same as $\cD$. 
For each query $q \in \cQ$, the matching documents $S(q)$ include itself, its hypernyms, and hypernyms of all hypernyms, etc. 
For example, matching documents for the query ``cat'' include ``cat'', ``feline'', ``carnivore'', ``placental'', etc. 
In practice, we make a slight modification to this definition by restricting to $(q, d)$ pairs with a distance of at most $8$.  
Here, the distance between two synsets is defined as the length of the shortest path that connects them in the hypernym DAG. 

Unless specified otherwise, we use the following \emph{regular sampling} procedure to generate training and evaluation data.
First, a query $q$ is sampled uniformly at random among all 82,115 synsets.
Then, a document is sampled uniformly at random from the set of all matching documents to $q$.

\begin{table*}[t]
\vspace{-0.5em}
\caption{Quality of DE for HR on WordNet. 
\emph{Regular sampling} refers to first sampling a query then a document uniformly at random from the set of all matching documents. 
\emph{Rebalanced} means a mixture of data from regular sampling with a proportion $p$ and a heavy-tail data of proportion $1-p$ where long-distance pairs are upsampled proportionally to their distance. 
\emph{Pretrain-finetune} (ours) refers to first pretraining on regular sampling data then finetuning on heavy-tail data. 
Quality is measured by averaged recall on a test set. 
\emph{Our method (i.e., pretrain-finetune) enables good retrieval quality for query-document pairs at all distances. }
}
    \vskip -0.1in
    \label{table:wordnet-results}
    \begin{center}
    \begin{small}
    \begin{adjustbox}{max width=0.99\textwidth}
        \begin{tabular}{lccccccccccc}
        \toprule
                 &\multicolumn{11}{c}{\bf Query-document distance} \\
        \cmidrule{2-10}
        {\bf Method} & 0 & 1 & 2 & 3 & 4 & 5 & 6 & 7 & 8 & Min & Overall\\
        \midrule
        \multicolumn{12}{l}{\emph{Embedding dimension = 16:}} \\
        Regular sampling & \textbf{100.0} & \textbf{62.8} & \textbf{46.6} & 33.9 & 20.9 & 11.9 & 7.2 & 2.9 & 1.0 & 1.0 & 43.0 \\
        \textbf{Pretrain-finetune (Ours)} & \textbf{100.0} & 57.1 & 46.4 & \textbf{47.9} & \textbf{50.2} & \textbf{53.6} & \textbf{53.1} & \textbf{47.3} & \textbf{32.0} & \textbf{32.0} & \textbf{60.1} \\
        \midrule
        \multicolumn{12}{l}{\emph{Embedding dimension = 32:}} \\
        Regular sampling & \textbf{100.0} & \textbf{90.8} & \textbf{79.2} & 62.3 & 46.4 & 31.8 & 20.1 & 14.8 & 8.4 & 8.4 & 61.8 \\
        \textbf{Pretrain-finetune (Ours)} & \textbf{100.0} & 77.3 & 76.5 & \textbf{80.4} & \textbf{83.5} & \textbf{84.2} & \textbf{84.3} & \textbf{80.1} & \textbf{67.3} & \textbf{67.3} & \textbf{87.3} \\
        \midrule
        \multicolumn{12}{l}{\emph{Embedding dimension = 64:}} \\
        Regular sampling & \textbf{100.0} & \textbf{93.9} & 86.9 & 76.8 & 60.2 & 46.8 & 36.1 & 28.8 & 19.4 & 19.4 & 71.4 \\
        Rebalanced (p=0.01) & 0.6 & 46.9 & 69.7 & 67.2 & 56.3 & 44.7 & 39.4 & 34.9 & 36.6 & 0.6 & 41.8 \\
        Rebalanced (p=0.03) & 2.8 & 49.6 & 69.4 & 64.2 & 52.1 & 43.2 & 37.7 & 31.8 & 32.7 & 2.8 & 40.7 \\
        \textbf{Pretrain-finetune (Ours)} & \textbf{100.0} & 90.8 & \textbf{91.6} & \textbf{92.7} & \textbf{92.6} & \textbf{91.8} & \textbf{90.9} & \textbf{87.3} & \textbf{75.7} & \textbf{75.7} & \textbf{92.3} \\
        \bottomrule
        \end{tabular}
    \end{adjustbox}
    \end{small}
    \end{center}
    \vspace{-1em}
\end{table*}

\vspace{-0.5em}
\paragraph{\emph{Lost-in-the-long-distance.}}
We train a lookup-table DE by optimizing \Cref{eq:batch-softmax} using SGD for 50k iterations on 10M matching pairs from regular sampling. 
We use learning rate $0.5$, momentum $0.9$, and batch size 4096. 
To evaluate the learned DE, we use the recall metric defined as the percentage of $(q, x)$ pairs for which $x$ is one of the $k$ documents that have the largest inner product score with $q$, with $k = |S(q)|$. 
We use a validation set of size 10k to pick the best checkpoint.
Then, we report in Table~\ref{table:wordnet-results} the recall computed on a test set of size 10k, including an overall recall that is averaged over all pairs in the test set, and recall on slices with different query-document distances (i.e., 0, 1, ..., 8). 
For a varying dimension of the embedding space in $\{16, 32, 64\}$, we observe that quality degrades rapidly as a function of the distance between the query and document. 
A similar qualitative behavior is also observed in \cite{he2024language}.

\begin{wrapfigure}{r}{0.4\textwidth}
    \vspace{-1.5em}
    \begin{center}
        \centerline{
            \includegraphics[width=0.38\textwidth]{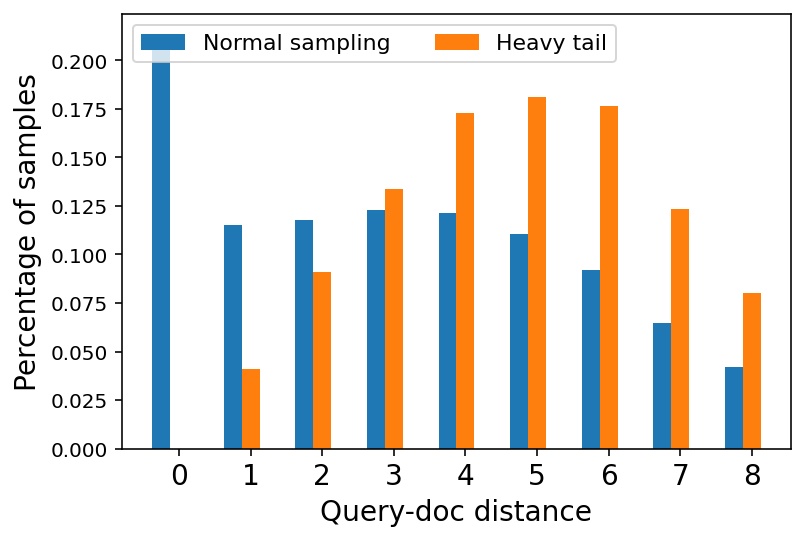}
        }
        \vskip -0.1in
        \caption{Distribution of regular sampling and heavy-tail data over varying query-document distances on WordNet. }
        \label{fig:wordnet_sample_distribution}
    \end{center}
\vskip -0.3in
\end{wrapfigure}
\vspace{-0.3em}
\paragraph{Rebalanced data sampling is insufficient.}
The \emph{lost-in-the-long-distance} phenomenon may be attributed to the distribution of data from regular sampling, which is biased towards pairs with short distances (see Figure~\ref{fig:wordnet_sample_distribution}). 
A natural choice is to use a heavy-tail sampling of the training dataset, which works as follows. 
First, a query $q$ is sampled uniformly at random from all synsets. 
Then, the matching document for $q$ is sampled with a probability \emph{proportional to the distance between the document and $q$}. 

We create a \emph{rebalanced} dataset where each batch has $p \times 4096$ pairs from regular sampling and $(1-p) \times 4096$ from heavy-tailed data. 
Results with $p=0.01$ or $p=0.03$ for embedding dimension 64 are reported in Table~\ref{table:wordnet-results}. 
It can be seen that rebalanced data improves the retrieval quality on long distance pairs but at the cost of compromising quality on short distance pairs, aligning  with the observation in \Cref{sec:improving-hr}.

\begin{table*}
\caption{
Spearman correlation score $\rho$ on Hyperlex for DE trained on WordNet. 
We used $5$-dimensional embeddings to be consistent with prior work. 
\emph{Our method (i.e., pretrain-finetune) obtains the best correlation score.  }
}
\vspace{-0.5em}
    \label{table:hyperlex-results}
    \begin{center}
    \begin{small}
    \begin{adjustbox}{max width=0.99\textwidth}
        \begin{tabular}{lp{2.7cm}p{2.7cm}p{3.5cm}p{1.8cm}p{2.5cm}}
        \toprule
        {\bf Method} & OrderEmb \newline \cite{vulic2017hyperlex} 
                            & WN-Basic \newline \cite{vulic2017hyperlex}
                                    & WN-Euclidean  \newline \cite{nickel2017poincare} 
                                        & Regular sampling & \textbf{Pretrain-finetune (ours)}\\
        \midrule
        $\rho$       & 0.195      & 0.240    & 0.389        & 0.350           & \textbf{0.415} \\
        \bottomrule
        \end{tabular}
    \end{adjustbox}
    \end{small}
    \end{center}
    \vskip -1.2em
\end{table*}

\vspace{-0.8em}
\paragraph{Our pretrain-finetune recipe offers a solution.}
We pretrain the DE on data from normal sampling, then finetune on the heavy-tail dataset.
During finetuning, we reduce the learning rate to 1,000 times smaller and increase the temperature in \Cref{eq:batch-softmax} from 20 to 500; an ablation study on these two hyper-parameters is provided in \Cref{sec:ablation-finetuning-wordnet}.
In both of the two stages, we pick the best checkpoint on the validation set. 
The results in Table~\ref{table:wordnet-results} demonstrate a significant retrieval quality improvement for long distance document, leading to much higher overall recall. 
We further provide an example of the retrieved documents for selected queries in Table~\ref{table:wordnet-results-examples}.  
These examples show that regular sampling tends to miss the long distance pairs and the pretrain-finetune recipe fixes many such errors.

\vspace{-0.8em}
\paragraph{Hypernymy evaluation.}
We supplement our evaluation by using HyperLex \citep{vulic2017hyperlex}, a dataset for evaluating how well a model captures the hyponymy-hypernymy relation between concept pairs. 
Here, we evaluate our DE models learned on WordNet using regular sampling as well as the pretrain-finetune recipe. 
We also compare with results from previous papers and report the results in Table~\ref{table:hyperlex-results}. 
This result confirms the effectiveness of the pretrain-finetune recipe.

\subsection{Experiment on ESCI Shopping Dataset}

ESCI \citep{reddy2022shopping} is a public Amazon search dataset, containing 2.6 million manually labeled query-product relevance judgements in four categories, namely, \emph{Exact, Substitute, Complement}, and \emph{Irrelevant}. 
For our experiment, we focus on \emph{Exact}, where the product is relevant for the query and satisfies all query specifications, and \emph{Substitute} where the product is somewhat relevant and fails to fulfill some aspects of the query. 
We consider the task of retrieving, given a user query, both Exact and Substitute documents by a DE\footnote{
This task may not fit exactly the HR problem definition. 
However, the similarity to HR is that a Substitute match may be considered as having a longer distance to the query than an Exact match, leading to the same \emph{lost-in-long-distance} challenge as HR. 
}.

\vspace{-0.5em}
\paragraph{Training and evaluation data.}
ESCI comes with a train vs test data splitting. 
We take the Exact and Substitute pairs from the train split as our training sets, denoted as $E_\text{train}$ and $S_\text{train}$, respectively. 
$E_\text{train}$ and $S_\text{train}$ contain 1.3 million and 0.4 million matches, respectively.
We sample 5k Exact and 2k Substitute pairs from the test split for evaluating our model. These two sets are denoted as $E_\text{test}$ and $S_\text{test}$, respectively.  
To evaluate our model, we use
\begin{equation}
    \text{Recall}@k = 
    \frac{
        \#\{(q, d) \in T \,|\, \text{$d$ is among the top-$k$ matches for $q$}\}
    }{|T|},
\end{equation}
where $T$ is either $E_\text{test}$ or $S_\text{test}$.  
In words, it is the percentage of $(q, x)$ pairs in the set $T$ with the property that the inner product score between $q$ and $x$ is among the $k$ largest ones across all $x \in \cD$.
Here, $\cD$ is all products provided as part of ESCI and has a size of approximately 1 million.

We use the SentencePiece tokenizer, Transformers for the encoder models in DE, and the Lazy Adam optimizer \citep{LazyAdam}. Details are provided in \Cref{sec:implementation-details}.

\begin{figure}[t]
     \vskip -0.2in
    \begin{center}
        \begin{subfigure}{0.35\columnwidth}
            \includegraphics[width=0.98\columnwidth]{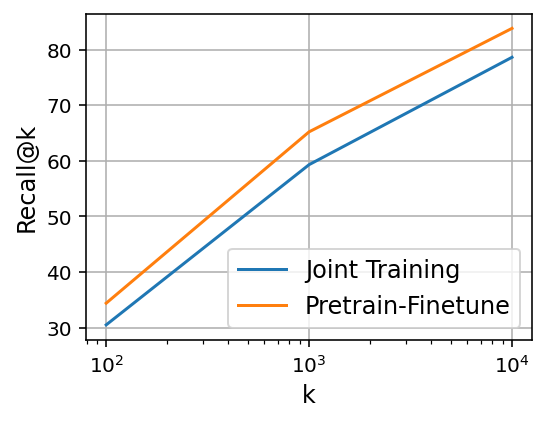}        
            \caption{Results on \emph{Exact}}
            \label{fig:esci-results-exact}
        \end{subfigure}
        ~~~
        \begin{subfigure}{0.35\columnwidth}
            \includegraphics[width=0.98\columnwidth]{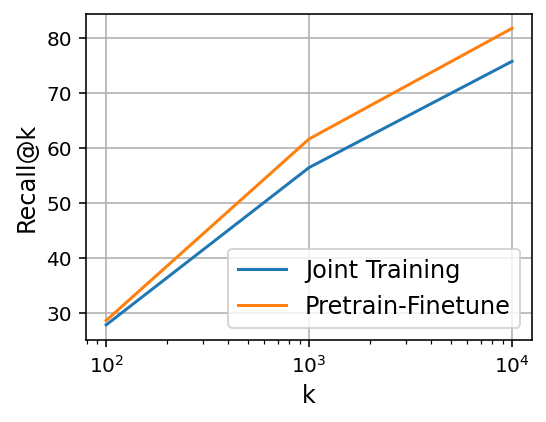}
            \caption{Results on \emph{Substitute}}
            \label{fig:esci-results-substitute}
        \end{subfigure}
        \caption{
        Quality of \emph{Exact} (left) and \emph{Substitute} (right) retrieval on the ESCI dataset using a single DE. 
        By first pretraining on Exact then finetuning on Substitute matches, \emph{our pretrain-finetune recipe performs better than naively joint training on Exact and Substitute matches.} 
        }
        \label{fig:esci-results}
    \end{center}
\vspace{-1.9em}
\end{figure}

\vspace{-0.5em}
\paragraph{Methods and Results.}
A naive approach for this task is \emph{Joint Training}, where DE is trained on the union of $E_\text{train}$ and $S_\text{train}$. 
We compare this with our pretrain-finetune recipe, where a DE is pretrained on $E_\text{train}$ then finetuned on $S_\text{train}$.
The results are presented in \Cref{fig:esci-results} on Exact matches (Left) and Substitute matches (Right).
It shows that the pretrain-finetune recipe performs better than joint training in terms of recall@k for varying values of $k \in \{100, 1000, 10000\}$. 

\vspace{-0.5em}
\section{Conclusion}
\vspace{-0.5em}

This paper studies the theory and practice of dual encoders (DE) for hierarchical retrieval (HR), the task where the document set is organized into a hierarchy. 
Through a geometric analysis, we first validated rigorously that DEs are capable of solving the HR problem despite the constraints from the Euclidean geometry. 
We then demonstrated through experiments that such DEs can be found in practice via standard DE training. 
Towards improving the practical performance of DE, we introduced a pretrain-finetune recipe which addresses the challenge associated with long-distance pairs. Finally, the effectiveness of this recipe is verified on real datasets including WordNet and ESCI shopping queries.

\bibliography{main}

\newpage
\begin{appendices}

\counterwithin{figure}{section}
\counterwithin{table}{section}
\counterwithin{equation}{section}

\section{Proof of Theorem \ref{thm:main}}
Here we present the proof that our construction of embeddings solves the dual encoder embedding construction task. 
In what follows, we write $a = x \pm y$ to indicate the containment $a \in [x - y, x+y]$, and we assume $n=m$ without loss of generality.

\begin{proof}[Proof of Theorem \ref{thm:main}]

    We begin by drawing standard Gaussian vectors $x_1,\dots, x_n \sim \mathcal{N}(0,I_d) \in \R^d$ (we will later normalize them). Next, we set $q_i = \frac{1}{\sqrt{|S_i|}}\sum_{j \in S_i} x_j$ (here we abuse notation and think of $S_i \subset [m]$ to be the corresponding indices). Note that, by stability of Gaussian random variables and independence of the $x_j$'s, it follows that each coordinate of $q_i$ is distributed independently as a standard normal distributed (i.e. $\mathcal{N}(0,1)$). 
    By standard $\chi^2$ concentration (e.g.  \cite[Lemma 1]{laurent2000adaptive}), for any single Gaussian vector $g$ and value $\lambda >0$, if $g \sim \mathcal{N}(0,I_d)$ is a vector of i.i.d. standard Gaussian variables, we have $|\|g\|_2^2 - d|\leq 2 \sqrt{d \lambda} + 2\lambda$ with probability at least $1-2\cdot 2^{-\lambda}$. Setting $\lambda = 4 \log n$  and taking $d = \Omega(\log n)$ with a sufficiently large constant, we have 
    \[ \pr{\left|\|g\|_2^2 - d\right|\leq 5 \sqrt{d \log n} } \geq 1 - n^{-4}\]

    We can thus condition on $\|x_i\|_2^2 = d \pm 5 \sqrt{d \log n}$ and $\|q_i\|_2^2 = d \pm 5 \sqrt{d \log n}$ occurring for all $i \in [n]$, which holds with probability at least $1-2n^{-3}$ by a union bound over the $2n$ vectors. Call this event $\cE_1$.

Let $E$ be the edge set of the HR problem, namely, $(i,j) \in E$ iff $j \in S_i$.     
 To further analyze the construction, first define the event $\cE_2$ that for all $(i,j) \notin E$, we have $|\langle q_i , x_j \rangle| \leq 100 \sqrt{d \log n}$. Also define the event $\cE_3$ that for all $(i,j) \in E$ we have $|\langle \sum_{t \in S_i \setminus j} x_t , x_j \rangle| < 100 \sqrt{d s \log n}$. 
    
    We first analyze $\pr{\cE_2 }$. If $(i,j) \notin E$, then $q_i$ and $x_j$ are independent Gaussian vectors, thus by Gaussian stability we have
    
    \[|\langle q_i , x_j \rangle| \sim |g|\cdot \|x_j\|_2 \leq |g| \sqrt{d} \left(1 + 5 \sqrt{\frac{\log(n)}{d}}\right)^{1/2} \leq |g|\sqrt{d} (1 + \tfrac{1}{100})\]
    where $g \sim \cN(0,1)$ and we took $d = \Omega(\log n)$. Via the density function of a Gaussian, we have $\pr{|g|\cdot \|x_j\|_2  > 100 \sqrt{d \log n} }< 1/n^4$. Thus, by a union bound over at most $n^2$ pairs, we have $\pr{\cE_2} > 1-1/n^2$. For $\cE_3$, note that for any $i \in [n]$ with $j \in S_i$, the vector $\sum_{t \in S_i \setminus j} x_t$ is distributed like $\cN(0,\sqrt{|S_i|-1} \cdot I_d)$; namely each coordinate is i.i.d. Gaussian distributed with variance $|S_i|-1$. Thus 
    \[\left|\left\langle \sum_{t \in S_i \setminus j} x_t , x_j \right\rangle \right| \sim \sqrt{|S_i|-1} \cdot |g| \cdot \|x_j\|_2 < |g| \sqrt{s d}(1+\tfrac{1}{100})\]
    where again  $g \sim \cN(0,1)$. Following the same argument as above yields $\pr{\cE_3} > 1-n^{-2}$.

    In what follows, let $\gamma = 10\cdot\max\{s,\frac{1}{\eps^2}\}$, and set the dimension  $d = C \gamma \log n$ for a sufficiently large constant $C$.
    Conditioned on $\cE_1,\cE_2,\cE_3$, we claim that the vectors $q_1/\|q_1\|_2,\dots,q_n/\|q_n\|_2,x_1/\|x_1\|_2,\dots, x_n/\|x_n\|_2$ satisfy the desired properties with threshold $r = \frac{1}{4 \sqrt{\gamma}}$. For case one, if $j \in S_i$ we have
    \begin{equation}
        \begin{split}
            \left\langle \frac{q_i }{ \|q_i\|_2},\frac{ x_j }{\|x_j\|_2 } \right\rangle &= \frac{1}{\|q_i\|_2 \|x_j\|_2 \sqrt{|S_i|} }  \left(\|x_j\|_2^2+  \left\langle \sum_{t \in S_i \setminus j} x_t , x_j \right\rangle \right)\\
            &\geq \frac{2}{3 d \sqrt{\gamma}} \left(\frac{2}{3} d - 100 \sqrt{d s \log n} \right) > \frac{4}{9 \sqrt{\gamma}} - \frac{200}{3 }\cdot \sqrt{\frac{ \log n}{d}} \\
            & \geq \frac{1}{3\sqrt{\gamma}}   
        \end{split}
    \end{equation}
    Where we used the bounds on $d$. Next, for case two, when $j \notin S_i$, using the events $\mathcal{E}_1,\mathcal{E}_2$, we have

     \begin{equation}
           \left| \left\langle \frac{q_i }{ \|q_i\|_2},\frac{ x_j }{\|x_j\|_2 } \right\rangle \right| < \frac{1}{\|q_i\|_2 \|x_j\|_2} \left| \langle q_i, x_j \rangle \right| \leq \frac{3}{2d} 100 \sqrt{d \log n}= \frac{150 }{ \sqrt{C \gamma}} \leq  r/2          
    \end{equation}
Where we took $C > (2 \cdot 4 \cdot 150)^2$,  which completes the proof as $r/2 < r-\eps$. Finally, note for runtime, one needs only generate $O(m)$ $d$-dimensional Gaussian vectors, and then compute each $q_i$ which takes $\tilde{O}(sd)$ time each, thus the total time is $\tilde{O}(md + nsd)$ as desired.
    
\end{proof}





\section{Implementation Details on ESCI Dataset}
\label{sec:implementation-details}

Here we provide additional details for experiments on ESCI. 
We use the SentencePiece model to tokenize the queries and products which are fed to standard 8-layer Transformers as the architecture for the encoder models in DE. 
For the Transformer, we use model dimension 512, 8 attention heads, two-layer MLP with GELU activation and a hidden dimension of 4096 as the feedforward network. 
The output embeddings from the Transformer are mean-pooled and projected to 128 dimensions, followed by a normalization to the unit $\ell_2$ sphere as the final embedding. 
The model is trained with the Lazy Adam optimizer \citep{LazyAdam} with a warmup stage of 2000 steps to a learning rate of 1e-4, followed by a linear decay to 1e-6 at step 50000. 

\section{Additional Experiments for the Toy Setup in \Cref{sec:learning}}

In Figure~\ref{fig:rebalance-varying-p}, we provide additional results complementing Figure~\ref{fig:qd_dist_recall_imbalanced}. 
These results reconfirm that rebalanced sampling cannot effectively solve the lost-in-the-long-distance issue. 

\begin{figure}[t]
\centering
    \begin{subfigure}{0.32\textwidth}
        \includegraphics[clip=true,trim=0 0 0 0,width=\columnwidth]{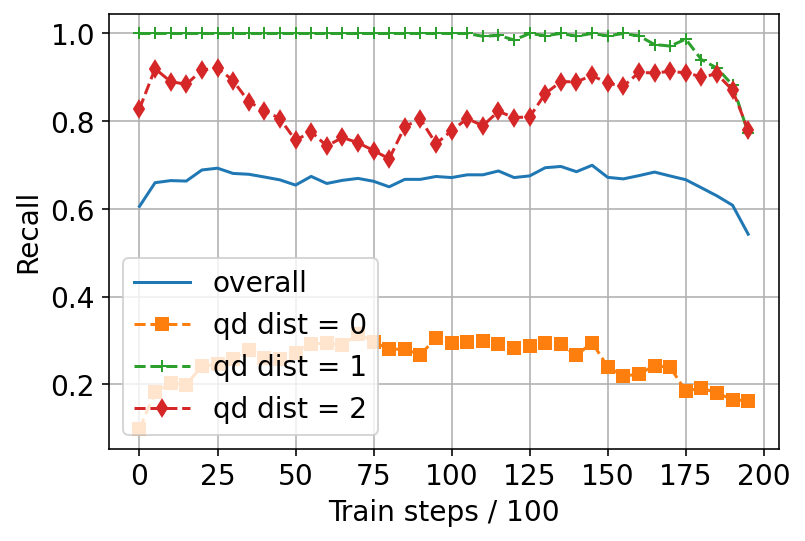}        
        \caption{$p = 0.01$}
    \end{subfigure}
    ~
    \begin{subfigure}{0.32\textwidth}
        \includegraphics[clip=true,trim=0 0 0 0,width=\columnwidth]{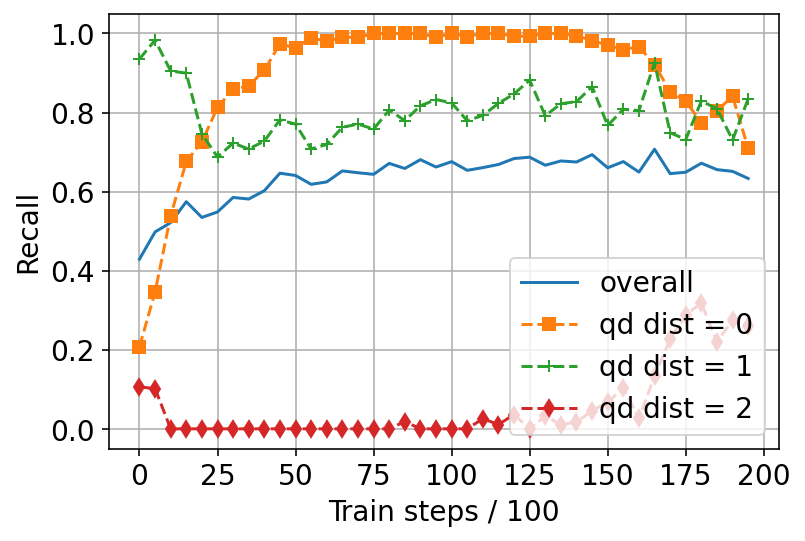}
        \caption{$p = 0.1$}
    \end{subfigure}
    ~
    \begin{subfigure}{0.32\textwidth}
        \includegraphics[clip=true,trim=0 0 0 0,width=\columnwidth]{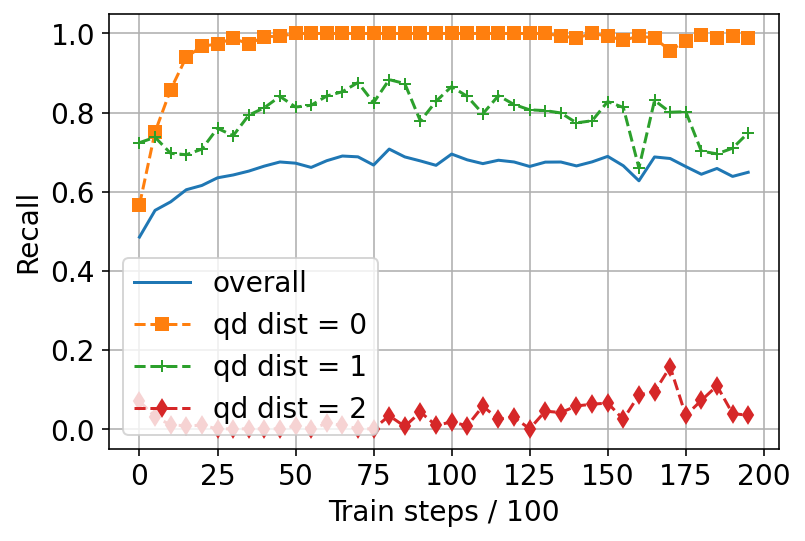}        
        \caption{$p = 0.3$}
    \end{subfigure}
\caption{
Effect of re-balanced training data with a varying ratio $p$ between the regular and the heavy-tail data on retrieval quality. 
}
\label{fig:rebalance-varying-p}
\end{figure}

\begin{table*}[t]
\caption{Retrieved synsets for selected queries with 64-dimensional embeddings. For each of the three queries considered here, we list the relevant documents in the row ``Groundtruth'' with an ascending order in their distance to the query. For ``Regular sampling'' and ``Pretrain-finetune'', we list top-$k$ documents in an ascending order of $k$. Documents retrieved but not in the groundtruth are underscored. }
\vskip -0.15in
    \label{table:wordnet-results-examples}
    \begin{center}
    \begin{adjustbox}{max width=\textwidth}
    \begin{small}
        \begin{tabular}{lcccccccccc}
        \toprule
        \multicolumn{10}{l}{\emph{Query = ``cat''}} \\
        Groundtruth       & cat & feline & carnivore  & placental  & mammal & vertebrate & chordate     & animal & organism \\
        Regular sampling   & cat & feline & carnivore  & placental  & mammal & \underline{wildcat}  & \underline{domestic cat} & vertebrate & canine \\
        Pretrain-finetune & cat & feline & chordate   & vertebrate & animal & placental  & mammal       & carnivore  & \underline{wildcat} \\
        \midrule
        \multicolumn{10}{l}{\emph{Query = ``recliner''}} \\
        Groundtruth       & recliner & armchair & chair  & seat   & furniture  & furnishing & instrumentality     & artifact & whole \\
        Regular sampling   & recliner & armchair & seat   & chair  & furnishing & furniture  & article             & \underline{ware}     & \underline{toy dog} \\
        Pretrain-finetune & recliner & armchair & seat   & chair  & furnishing & furniture  & instrumentality     & artifact  & \underline{cleaning pad} \\
        \midrule
        \multicolumn{10}{l}{\emph{Query = ``motorist''}} \\
        Groundtruth       & motorist & driver & operator  & causal agent        & physical entity  &                &      &   &  \\
        Regular sampling   & motorist & operator  & driver        & \underline{floridian}      & \underline{foe}            &   &  & &  \\
        Pretrain-finetune & motorist & operator  & driver        & physical entity      & causal agent            &     &   &    &   \\
        \bottomrule
        \end{tabular}
    \end{small}
    \end{adjustbox}
    \end{center}
    \vspace{-1em}
\end{table*}

\section{Comparison with Hyperbolic Embeddings on WordNet}

\begin{figure}[h]
\vskip 0.2in
    \begin{center}
        \centerline{
            \includegraphics[width=0.4\textwidth]{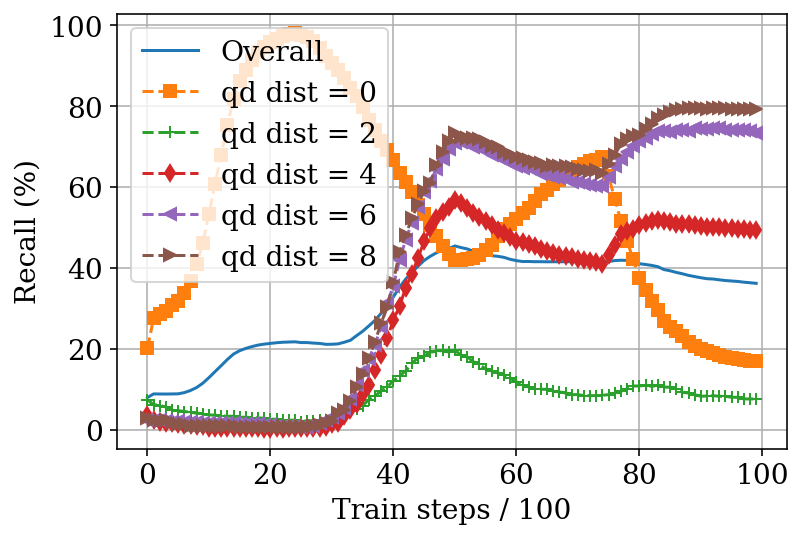}
        }
      \caption{Retrieval quality at varying query-document distances on WordNet, \textbf{using hyperbolic embeddings of dimension $16$}. 
      The peak performance is a recall of 45.4\% obtained at step 49. 
      This recall is higher than that with the Euclidean embeddings of the same dimension \emph{with standard training}, which is at 43.0\% (see \Cref{table:wordnet-results}), demonstrating the superiority of hyperbolic spaces.
      Nonetheless, it is worse compared to our pretrain-finetune recipe, which has a recall of 60.1\%.}
      \label{fig:wordnet_hyperbolic}
    \end{center}
\vskip -0.2in
\end{figure}

For hierarchical relations, hyperbolic space is a popular choice for addressing the shortcomings of the Euclidean space \citep{nickel2017poincare}.
Unfortunately, practical large-scale retrieval systems cannot widely adopt hyperbolic embeddings, due to a lack of an efficient approximate k-nearest neighbor search algorithms in hyperbolic spaces. 
Nonetheless, for the purpose of scientifically understanding the capability of hyperbolic geometry for solving the HR problem, here we implement hyperbolic embeddings and perform experiments on the WordNet dataset.

Specifically, we train $16$-dimensional hyperbolic embeddings on the same 10M normal sampling data as in \Cref{sec:experiments}.
Among many options for implementing hyperbolic embeddings, we use the reparameterization form in \cite{dhingra2018embedding}, using a learning rate of 0.01. 
All other training details are the same as those for Euclidean embeddings.

We evaluate the recall for query-document pairs at varying distances and report the results in Figure~\ref{fig:wordnet_hyperbolic}.
We see that hyperbolic embeddings struggle to obtain a good balance between pairs with short vs long distances. 
Specifically, the model first learns to retrieve pairs with short distances, i.e. with distance 0 and 2. 
As it starts to retrieve longer distance pairs, the quality on the short distance pairs drops rapidly. The best overall recall (i.e., averaged over all distances) is 45.4\% obtained at step 49. 
This recall is better than that of a DE trained on the same data, which is 43.0\% (see \Cref{table:wordnet-results}), showing the superiority of embedding in hyperbolic space. 
However, it is still worse than our pretrain-finetune approach, which obtains a recall of 60.1\%.

\section{Ablation Studies for Finetuning on WordNet}
\label{sec:ablation-finetuning-wordnet}

In this section, we study the effect of hyper-parameters in our pretrain-finetune recipe for the WordNet experiments presented in \Cref{sec:experiments}.
In particular, the results in \Cref{table:wordnet-results} are obtained with a finetuning learning rate that is 0.001 times the one used during  pretraining, and a temperature that is increased from 20 during pretraining to 500 when finetuning. 
Here, we vary the choice of this learning rate multiplier and temperature during finetuning, and present the results in \Cref{fig:wordnet-ablation}.

For varying learning rate multiplier (see \Cref{fig:wordnet-ablation-lr}), we observe that the recall on long distance pairs improves as this multiplier is increased from a very small number of 1e-7 up to 1e-3. 
Crucially, we observe that the recall on short distance pairs are not significantly affected, despite the fact that such pairs are not included in the finetuning data. 
However, when this multiplier is further increased from 1e-3, the model performance starts to deteriorate on both short and long distance pairs.

In terms of temperature (see \Cref{fig:wordnet-ablation-temp}), we see that the best recall is obtained with a temperature of around 500. Both much smaller and much larger values of temperature lead to a quality loss.

Finally, \Cref{fig:wordnet-ablation} shows that the model quality in terms of recall is not sensitive to the choice of these two hyper-parameters, making our method practically easy to tune.

\begin{figure}[t]
    \begin{center}
        \begin{subfigure}{0.4\columnwidth}
            \includegraphics[width=0.98\columnwidth]{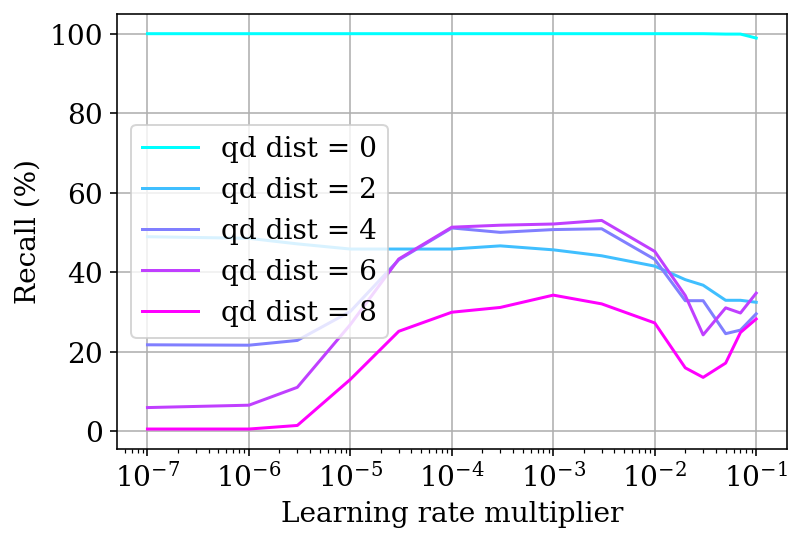}        
            \caption{Effect of Learning Rate}
            \label{fig:wordnet-ablation-lr}
        \end{subfigure}
        ~~~
        \begin{subfigure}{0.4\columnwidth}
            \includegraphics[width=0.98\columnwidth]{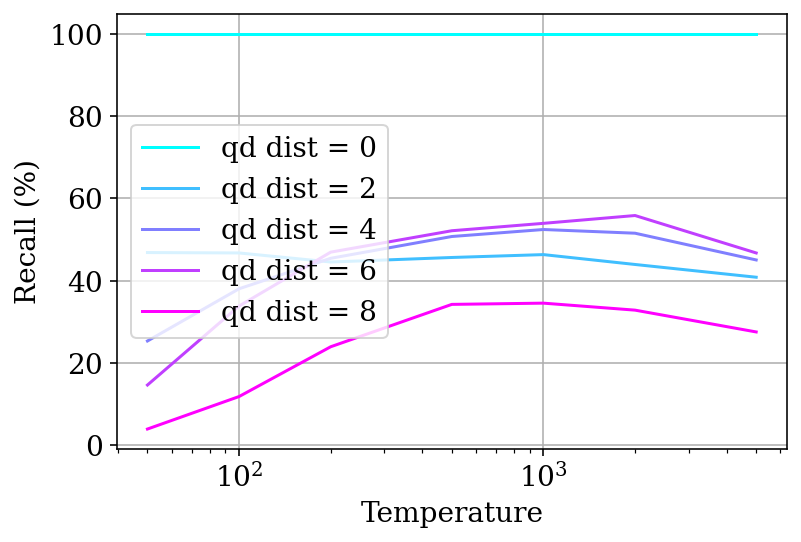}
            \caption{Effect of Temperature}
            \label{fig:wordnet-ablation-temp}
        \end{subfigure}
        \caption{
        Effect of learning rate and temperature during the finetuning stage of the pretrain-finetune recipe for HR on WordNet.
        }
        \label{fig:wordnet-ablation}
    \end{center}
\vspace{-1.9em}
\end{figure}

\end{appendices}

\end{document}